\providecommand{\keywords}[1]
{
  \small	
  \textbf{\textit{Keywords---}} #1
}
\newtheorem{theorem}{Theorem}[section]
\newcommand*\xbar[1]{
   \hbox{
     \vbox{
       \hrule height 0.7pt
       \kern0.7ex 
       \hbox{
         \kern-1em    
         \ensuremath{#1}
         \kern-1em  
       }
     }
   }
}
\newcommand*\ybar[1]{ 
   \hbox{ 
     \vbox{ 
       \hrule height 0.7pt 
       \kern0.4ex     
       \hbox{ 
         \kern-0.3em  
         \ensuremath{#1} 
         \kern-0.8em  
       } 
     } 
   }
}
\begin{document}

\title{Reducing Variance with Sample Allocation Based on Expected Response Rates}

\author{Blanka Szeitl\\
Bolyai Institute, University of Szeged\\ Tam\'{a}s Rudas\\ 
Department of Statistics, E\"{o}tv\"{o}s Lor\'{a}nd University, Budapest}
\date{}
\maketitle

\begin{abstract}
\noindent Several techniques exist to assess and reduce nonresponse bias, including propensity models, calibration methods, or post-stratification. These approaches can only be applied after the data collection, and assume reliable information regarding unit nonresponse patterns for the entire population.  In this paper, we demonstrate  that sample allocation taking into account the  expected response rates (ERR) have advantages in this context. The performance of ERR allocation is assessed  by comparing the variances of estimates obtained  those arising from a classical  allocation  proportional  to  size (PS) and then applying post-stratification. The main theoretical tool is asymptotic calculations using  the $\delta$-method, and these are complemented with extensive simulations.
The main finding is that the ERR allocation  leads to lower variances than the PS allocation, when the response rates are correctly 
 specified, and also under a wide range of conditions when the response rates can not be correctly specified in advance.
\end{abstract}

\hspace{10pt}
\keywords{sample allocation, nonresponse rate, delta-method}

\newpage
 
\section{Introduction}

For sample design of household or individual surveys, complex selection methods are typically needed to achieve the goal of variance reduction or variance efficiency. Precise sampling allocation methods are theoretically suitable to produce representative samples along previously appointed variables. \\Most commonly, allocation proportional to size (PS) \autocite{Lavrakas2008} and Neyman's allocation (which involves the previously estimated, or expected variance of an investigated variables) \autocite{Lavrakas2008} are in use. However, the exact sample realizations differ from the allocated ones. The differences between allocated and realized sample sizes are increasing  with a growing reluctance among respondents to participate in surveys \autocite{Stoop2004}. Based on field experiences and the analysis of current meta-data of surveys, the increasing survey nonresponse is not steady and equal among  population subgroups \autocite{Meyer2015}, which leads to sample bias. \\

\noindent Unit nonresponse, (when a household/individual in a sample is not interviewed at all) has been rising worldwide in most surveys. Unit nonresponse generally accumulates from non-contact (the household/individual cannot be reached), from refusal (the household/individual is reached but refuses to participate in the survey) or from incapacity to answer (the household/individual would participate but cannot because of e.g. health problems) \autocite{Osier2016}. Unit nonresponse only leads to bias if it is nonrandom. However, exploring whether unit nonresponse is random can be difficult, because only limited information is available on the characteristics of nonrespondents. Even if nonrespondents are similar to respondent based on a limited number of characteristics - gender, age and geography - this does not mean that these groups are similar along other dimensions, such as willingness to participate in government programs, social attitudes etc. \autocite{Meyer2015}. 
The social background of unit nonresponse has attracted great  research interest \autocite{Goyder2002}. The most commonly used hypotheses are linked to the theory of general and social trust and to the theory of social integration \autocite{Amaya2016}. For survey methodologists, nonresponse analysis is useful to only a certain depth, because background theories can be only hardly adapted to sampling procedures and also to nonresponse weighting processes. This is one of the main reasons why other studies examined basic proxies for more complex background theories (e.g. for weaker social integration) \autocite{Abraham2016}. Several meta-analyses identified response characteristics within population subgroups, however, stable and universal nonresponse patterns, or response probability estimates are still missing. They have found that single-person households, renters, and individuals out of the labor force are less likely to participate in surveys than other social groups \autocite{Abraham2016, Meyer2015}, which can be directly used for sampling, in terms of defining the strata and also for the allocation procedures.\\
\newpage
\noindent The goal of this paper is to show that if significant and stable nonresponse pattern differences could be established among given demographic subgroups of the population, sample allocation taking into account the expected response rates (ERR) have advantages. In fact, the ERR allocation leads  to lower variance then those obtained in the case of a classical  allocation  proportional  to  size (PS). The remainder of the paper is organized as follows. First, we briefly introduce the procedure of proportional allocation to size (PS) with post-stratification (\textit{Section 2.1}) and in \textit{Section 2.2} the method of ERR allocation is presented. The performance of the ERR allocation is assessed  by comparing the variances of estimates resulted by the estimation procedures (\textit{Section 3}) and the asymptotic calculations of the variances are done by the $\delta$-method in \textit{Sections 4.1 and 4.2}. The variance comparison is done first assuming correctly specified response rates in \textit{Section 4.3}. Here, the sampling procedure assumes that the expected response rates in strata are close to the observed ones and slight differences between the allocated and intended samples are corrected with post-stratification. In \textit{Section 4.4} the variance comparison is done in the case of  misspecified response rates, and simulation results are presented.

\section{Sample Allocation}
Denote the population size with $N$, and let $N_h$,  $(h=1,2, \ldots, H)$ be the sizes of the  strata relevant for the sampling procedure, with $N=N_1+...+N_h$.
In a stratified random sample, a simple random sample of $n_h$ elements is taken from each stratum $h$ \((h = 1,2,...,H)\), with a total sample size of $n$ elements.

To be able to distinguish the PS allocation from the ERR allocation later, in case of the PS allocation denote $n_h^{PS}$ \((h = 1,2,...,H)\) the sub-sample size within stratum \textit{h}, and in case of ERR allocation denote $n_h^{ERR}$ \((h = 1,2,...,H)\) the sub-sample size within stratum \textit{h}. \\

\noindent When nonresponse is taken into account, it is  important to distinguish allocated  and intended sample sizes. Let $r$ denote the average response rate of the total population and $r_h$ \((h = 1,2,...,H)\), the expected \footnote{Expected strata-specific response rates can be derived from previous research experiences, meta-data analysis or simple estimation based on the research design.} strata-specific response rates. Clearly,

$$
r=\sum_{h=1}^{H}\frac{r_hN_h}{N}
$$

If the nonresponse rate is assumed to be constant across strata, the relation between allocated and intended sample sizes is
$$n_{intended}=n_{allocated}\cdot r,$$
In the sequel,  the intended total sample size in the survey $n_{intended}$ is denoted by  $m$.

\subsection{Allocation Proportional to Size (PS)}
In proportional to size allocation, the sampling fraction $n_h^{PS}/N_{h}$ is specified to be the same for each stratum, which implies also that the overall sampling fraction $n/N$ is the fraction taken from each stratum and

\begin{equation}
n_h^{PS}=
\left(\frac{1}{r}\frac{N_h}{\sum_{i=1}^{H}N_i}m\right)
 \qquad \text{\textit{h=1,...,H.}} 
\end{equation}

\noindent The total allocated sample size is
\begin{equation}
n^{PS}=
\frac{1}{r}m
\end{equation}

\noindent Often, nonresponse, particularly where the nonresponse rate differs across strata, can drastically affect the intended sample size in each stratum and in total as well. In case of the PS allocation $r$ is assumed to be constant across strata and also with the assumption of the expected response rate is close to the observed one. Slight observed differences are corrected with post-stratification. Note, that a stratified sample with proportional allocation is self-weighting only if the proportion of sampled individuals who respond is the same within each stratum.

\subsection{Allocation Based on Different Expected Response Rates (ERR)}
In ERR allocation, the number of allocated elements in each stratum $n_h^{ERR}$ is specified using the expected strata-specific response rates. The allocated sample size in each stratum is:

\begin{equation}
n_h^{ERR}=
\left(\frac{1}{r_h}\frac{N_h}{\sum_{i=1}^{H}N_i}m\right)
 \qquad \text{\textit{h=1,...,H.}} 
\end{equation}

\noindent Total allocated sample size is
\begin{equation}
n^{ERR}=m\sum_{h=1}^{H}\frac{N_h}{N}\frac{1}{r_h}
\end{equation}

\section{Estimation Procedures}
In order  to assess the ERR and PS allocations by comparing the variances of the estimates obtained, in this section we discuss the estimating procedures the asymptotic variances of which will be specified using the $\delta$-method in \textit{Section 4}.\\

\noindent The task is to estimate the fraction of those who would say 'yes' to a given question, based on samples with ERR and PS allocations, respectively. In both cases,  post-stratification will be applied before estimation to properly reproduce the relative sizes of the strata in the population. \\
\newpage
\noindent It is assumed that responding to the survey is probabilistic and occurs in stratum $h$ with probability $p_h$ and is independent from the true answer to the question of interest. The probability of nonresponse\footnote{For the present argument, it is irrelevant whether nonresponse applies to the entire survey or only to the current question.} is assumed to be $1-p_h$ in each stratum $h$. Thus, the data are missing completely at random. The probability of response 'yes' is assumed to be $q_h$ in each stratum $h$. Under the previous assumptions, the complete data, for each stratum, would be the observation of a variable $\textbf{Z}_h$ with the following $4$ components: 
\begin{enumerate}
     \item $Z_{h1}$ counts the number of cases when the selected respondent did not answer and the answer would have been 'no'; 

    \item$Z_{h2}$ counts the number of cases when  the selected respondent did not answer and the answer would have been 'yes'; 

    \item$Z_{h3}$ counts the number of cases when  the selected respondent did answer and the answer was 'no'; 

    \item$Z_{h4}$ counts the number of cases when  the selected respondent did answer and the answer was 'yes'.
\end{enumerate}

 \noindent Within stratum $h$, $\textbf{Z}_h$ has a polynomial distribution with parameters $n_h$ and $\textbf{q}_h$, where $n_h$ is the allocated sample size for stratum $h$. Note that the allocated sample sizes are different under ERR and PS allocations. For the entire sample, the variables $\textbf{Z}_h$ have a product multinomial distribution. Under the assumed independence, the relevant population probabilities in stratum $h$ are 
 $$
 \textbf{q}_h=((1-p_h)(1-q_h), (1-p_h)q_h, p_h(1-q_h), p_h q_h).
 $$ 
and the observed sample size is $o_h=Z_{h3}+Z_{h4}$, instead of the allocated sample size. Thus, for each observation in stratum $h$, a post-stratification weight of 
$$
\frac{\frac{N_h}{N} \sum_{i=1}^{H} o_i}{o_h}
 \qquad \text{\textit{h=1,...,H.}}
$$ 
is applied, which adjusts the fraction of the sample size in stratum $h$ to be equal to the population fraction of stratum $h$ but does not change the total observed sample size. After the weight is applied, $Z_{h_j}$ is replaced by
$$
\frac{N_h}{N}\cdot\frac{\sum_{i=1}^H(Z_{i3}+Z_{i4})}{Z_{h3}+Z_{h4}} Z_{h_j},
 \qquad \text{\textit{j=1,2,3,4}}
  \qquad \text{\textit{h=1,...,H.}}$$
With this, the natural estimator for the fraction of 'yes' responses in stratum $h$ is

\begin{equation}
\hat{q}_h = \frac{\frac{N_h}{N}\cdot\frac{\sum_{i=1}^H(Z_{i3}+Z_{i4})}{Z_{h3}+Z_{h4}}Z_{h4}}
{\frac{N_h}{N}\cdot\frac{\sum_{i=1}^H(Z_{i3}+Z_{i4})}{Z_{h3}+Z_{h4}} Z_{h3}+\frac{N_h}{N}\cdot\frac{\sum_{i=1}^H(Z_{i3}+Z_{i4})}{Z_{h3}+Z_{h4}} Z_{h4}}
=\frac{Z_{h4}}{Z_{h3}+Z_{h4}},
\end{equation}
which is the relative frequency of 'yes' responses among all responses observed in stratum $h$. The estimator for the fraction of 'yes' responses in the total sample is

\begin{equation}
\hat{q} =\frac{\sum_{h=1}^{H}\frac{N_h}{N}\cdot\frac{\sum_{i=1}^H(Z_{i3}+Z_{i4})}{Z_{h3}+Z_{h4}}Z_{h4}}{\sum_{h=1}^{H}(\frac{N_h}{N}\cdot\frac{\sum_{i=1}^H(Z_{i3}+Z_{i4})}{Z_{h3}+Z_{h4}}Z_{h3}+\frac{N_h}{N}\cdot\frac{\sum_{i=1}^H(Z_{i3}+Z_{i4})}{Z_{h3}+Z_{h4}}Z_{h4})}
\end{equation}

\begin{equation}
=
\frac{\sum_{h=1}^{H}\frac{N_h}{N}\cdot\frac{\sum_{i=1}^H(Z_{i3}+Z_{i4})}{Z_{h3}+Z_{h4}}Z_{h4}}{\sum_{h=1}^{H}(\frac{N_h}{N}\cdot\frac{\sum_{i=1}^H(Z_{i3}+Z_{i4})}{Z_{h3}+Z_{h4}}Z_{h3}+\frac{N_h}{N}\cdot\frac{\sum_{i=1}^H(Z_{i3}+Z_{i4})}{Z_{h3}+Z_{h4}}Z_{h4})}
\end{equation}

\begin{equation}
=\frac{\sum_{h=1}^H \frac{N_h}{(Z_{h3}+Z_{h4})}Z_{h4}}{\sum_{h=1}^H N_h}
=\frac{\sum_{h=1}^H N_h\frac{Z_{h4}}{Z_{h3}+Z_{h4}}}{N}
\end{equation}

\begin{equation}
=\frac{1}{N}\sum_{h=1}^{H}N_h\frac{Z_{h4}}{Z_{h3}+Z_{h4}}
\end{equation}
which is the  weighted fraction of 'yes' responses among all responses observed in the total sample.

\section{Variance Comparison}
In this section we compare the variances of estimates derived from the ERR and PS allocations. This will be done with the $\delta$-method, and its theoretical background is briefly introduced in the first sub-section. Then, the relevant calculations will be shown in the second sub-section.

\subsection{The $\delta$-Method}
The $\delta$-method is a general method to derive asymptotic variance formulas of a functions of random variables with known variances. The first formulation of this method was used for estimation of moments of functions of samples see \autocite{Cramer1946, Oehlert1992}. The same formulas are often used as approximate variances for finite but sufficiently large sample sizes \autocite{Rudas2018}.\\

\begin{theorem}[Multidimensional case of the $\delta$-method]
Let $X_n, n=1,2,...$ be a sequence of k-dimensional vector-valued random variables such that for some parameter $e$,

\begin{equation}
\sqrt{n}(X_n-e)\xrightarrow{d} Y, Y \sim N(0,Cov(e)),
\end{equation}
and let the real function $f:\mathbb{R}^k\xrightarrow{} \mathbb{R}^l$ be differentiable at e. Then
\begin{equation}
\sqrt{n}(f(X_n)-f(e))\xrightarrow{d} Z, Z \sim N(0,\nabla (f))(e)Cov(e)\nabla(f))'(e)),
\end{equation}
Where $\nabla (\textbf{f})$ denotes the partial derivative of $\textbf{f}$, which function has $l$ coordinates $f_i$, and each of them is a function of $k$ variables. $\nabla (\textbf{f})$ is an $l$ x $k$ matrix, which has one row for every coordinate of $\textbf{f}$ and one column for every variable of $\textbf{f}$. The derivative function contains the partial derivatives of $\textbf{f}$. 
\end{theorem}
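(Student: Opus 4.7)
The plan is to reduce the theorem to the well-known behavior of a linear map applied to a convergent sequence, using a first-order Taylor expansion of $f$ around $e$ together with Slutsky's theorem and the continuous mapping theorem. Since $f$ is differentiable at $e$, I can write
\begin{equation}
f(X_n) - f(e) = \nabla f(e)\,(X_n - e) + R(X_n, e),
\end{equation}
where the remainder satisfies $\|R(x,e)\| = o(\|x - e\|)$ as $x \to e$. Multiplying by $\sqrt{n}$ gives the decomposition
\begin{equation}
\sqrt{n}\bigl(f(X_n) - f(e)\bigr) \;=\; \nabla f(e)\,\sqrt{n}(X_n - e) \;+\; \sqrt{n}\,R(X_n,e),
\end{equation}
so the task splits into identifying the limit of the linear part and showing the remainder vanishes in probability.

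First I would handle the linear part. By hypothesis, $\sqrt{n}(X_n - e) \xrightarrow{d} Y$ with $Y \sim N(0, \mathrm{Cov}(e))$. Since $x \mapsto \nabla f(e)\,x$ is a continuous linear map from $\mathbb{R}^k$ to $\mathbb{R}^l$, the continuous mapping theorem yields
\begin{equation}
\nabla f(e)\,\sqrt{n}(X_n - e) \xrightarrow{d} \nabla f(e)\, Y.
\end{equation}
A standard fact about linear transformations of Gaussian vectors then gives $\nabla f(e)\, Y \sim N\bigl(0,\ \nabla f(e)\,\mathrm{Cov}(e)\,\nabla f(e)'\bigr)$, which is exactly the target limit distribution.

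Next I would deal with the remainder, which I expect to be the only delicate step. From $\sqrt{n}(X_n - e) \xrightarrow{d} Y$ and the fact that $1/\sqrt{n} \to 0$, Slutsky's theorem implies $X_n - e \xrightarrow{P} 0$, hence $X_n \xrightarrow{P} e$. Because $R(x,e)/\|x-e\| \to 0$ as $x \to e$, for every $\varepsilon > 0$ there is $\delta > 0$ such that $\|R(x,e)\| \leq \varepsilon \|x - e\|$ whenever $\|x - e\| < \delta$. Using this bound on the event $\{\|X_n - e\| < \delta\}$ (whose probability tends to $1$), one obtains
\begin{equation}
\|\sqrt{n}\,R(X_n,e)\| \;\leq\; \varepsilon\,\|\sqrt{n}(X_n - e)\|
\end{equation}
with probability tending to $1$. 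Since $\sqrt{n}(X_n - e)$ is bounded in probability (being convergent in distribution) and $\varepsilon$ is arbitrary, this gives $\sqrt{n}\,R(X_n, e) \xrightarrow{P} 0$. Combining the two pieces via Slutsky's theorem yields the claimed convergence. The main obstacle, as noted, is the precise handling of the remainder: one must argue carefully that the little-$o$ behavior, which is only local around $e$, can be chained with the stochastic boundedness of $\sqrt{n}(X_n - e)$; the event-based argument above together with the arbitrariness of $\varepsilon$ is the cleanest way to accomplish this.
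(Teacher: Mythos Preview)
Your argument is the standard and correct proof of the multivariate delta method: linearize via differentiability at $e$, push the linear part through the continuous mapping theorem to get the Gaussian limit with covariance $\nabla f(e)\,\mathrm{Cov}(e)\,\nabla f(e)'$, and kill the remainder using $X_n \xrightarrow{P} e$ together with tightness of $\sqrt{n}(X_n-e)$.

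However, there is nothing to compare it to: the paper does not prove Theorem~4.1. It is stated as a background result, with citations to \textcite{Cramer1946}, \textcite{Oehlert1992}, and \textcite{Rudas2018}, and is then simply applied in Section~4.2. So your proposal supplies a proof where the paper gives none; it is correct, but strictly additional to what the paper contains.
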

Consequently, in the  application in the next sub-section, partial derivatives and covariance matrix play a central role. 

\subsection{Application of the $\delta$-Method}
The estimator for the fraction of 'yes' responses in the total sample is the result of the estimation procedure (9) in \textit{Section 3}, which is the weighted fraction of 'yes' responses among all responses observed in the total sample.
\noindent In this section we apply the $\delta$-method to derive the asymptotic variance of the estimator for the fraction of 'yes' responses, first for one strata and then expanded to the total sample. \\
\noindent In one strata, omitting the index $h$,  the estimation function is
    $f(Z)=\frac{Z_{4}}{Z_{3}+Z_{4}}$ and partial derivatives are

\begin{align*}
 \frac{df}{dZ_{1}}&=0                  &   \frac{df}{dZ_{2}}&=0\\
\frac{df}{dZ_{3}}&=-\frac{Z_{4}}{(Z_{3}+Z_{4})^2}          &  \frac{df}{dZ_{4}}&=\frac{Z_{3}}{(Z_{3}+Z_{4})^2}
\end{align*}
\noindent The partial derivative vector $D$ with the components above evaluated at the expectations  $E(Z_{3})=np_{3}$ and $E(Z_{4})=np_{4}$,  is
 \begin{equation}
 D=
  \begin{bmatrix} 
 0 \\
\\
 0 \\
\\
 -\frac{np_{4}}{(np_{3}+np_{4})^2} \\
\\
 \frac{np_{3}}{(np_{3}+np_{4})^2}
     \end{bmatrix}
 \end{equation}
 \noindent As $\boldsymbol{Z}$ has a multinomial distribution, its covariance matrix is
\begin{equation}
\Sigma = 
    \begin{bmatrix} 
 np_{1}(1-p_{1}) & -np_{1}p_{2} & -np_{1}p_{3} & -np_{1}p_{4} \\
-np_{2}p_{1} & np_{2}(1-p_{2}) & -np_{2}p_{3} & -np_{2}p_{4}  \\
-np_{3}p_{1} & -np_{3}p_{2} & np_{3}(1-p_{3}) & -np_{3}p_{4}  \\
-np_{4}p_{1} & -np_{4}p_{2} & -np_{4}p_{3} & np_{4}(1-p_{4})
    \end{bmatrix}
\end{equation}

\noindent Based on \textit{Theorem 4.1.} the asymptotic variance is obtained with $D^T\Sigma D$. Since $p_{h_3}=p_{h}(1-q_{h})$ and $p_{h_4}=p_{h}q_{h}$, with the proper substitutions the asymptotic variance is
\begin{equation}
D^T\Sigma D= 
\frac {(p_hq_h)(p_h(1-q_h))^2+(p_{h}q_{h})^2(p_{h}(1-q_{h}))}
{n((p_h(1-q_h)+p_hq_h)^4}
\end{equation}
\begin{equation}
=\frac{p_h^3q_h-p_h^3q_h^2}{n_hp_h^4}
=\frac{p_h^3q_h(1-q_h)}{n_hp_h^4}
=\frac{q_h(1-q_h)}{n_hp_h}
\end{equation}

\noindent The asymptotic variance in stratum $h$ with the allocation formula (1) in case of PS allocation is

\begin{equation}
\sigma^2_{h}(\hat{q})=\frac{q_h(1-q_h)}{n_h^{PS}p_h}
=\frac{q_h(1-q_h)}
{\left(\frac{1}{r}\frac{N_h}{\sum_{i=1}^{H}N_i}m\right)p_h},
\end{equation}
\noindent and for the total sample one obtains that
\begin{equation}
\sigma^2(\hat{q})=\frac{1}{N^2}\sum_{h=1}^{H}N_h^{2}\sigma^2_{h}(\hat{q})
=\frac{1}{N^2}\sum_{h=1}^{H}N_h^{2}\frac{q_h(1-q_h)}
{\left(\frac{1}{r}\frac{N_h}{\sum_{i=1}^{H}N_i}m\right)p_h}
\end{equation}

\begin{equation}
=\frac{1}{Nm}\sum_{h=1}^{H}N_hq_h(1-q_h) \frac{r}{p_h}.
\end{equation}

\noindent The asymptotic variance in stratum $h$ with the allocation formula (3) in case of ERR allocation is
\begin{equation}
\sigma^2_{h}(\hat{q})=\frac{q_h(1-q_h)}{n_h^{ERR}p_h}
=\frac{q_h(1-q_h)}
{\left(\frac{1}{r_h}\frac{N_h}{\sum_{i=1}^{H}N_i}m\right)p_h}
\end{equation}
and for the total sample one obtains that

\begin{equation}
\sigma^2(\hat{q})=\frac{1}{N^2}\sum_{h=1}^{H}N_h^{2}\sigma^2_{h}(\hat{q})
=\frac{1}{N^2}\sum_{h=1}^{H}N_h^{2}\frac{q_h(1-q_h)}
{\left(\frac{1}{r_h}\frac{N_h}{\sum_{i=1}^{H}N_i}m\right)p_h}
\end{equation}

\begin{equation}
=\frac{1}{Nm}\sum_{h=1}^{H}N_hq_h(1-q_h) \frac{r_h}{p_h}
\end{equation}

The results for the Variances of Estimates of ERR and PS allocations are summarized  in Theorem 4.2.
\begin{theorem}[Variance of Estimates]
Denote by $N_h$ the size of the population in each $h$ $(h=1,..,H)$ strata and   by $N$ the total population size. Let $m$ be the intended total sample size, $r$ the expected response rate in the population and $p_h$ the observed response rates in  stratum $h$. Denote by $q_h$ the estimated parameter in each stratum $h$. Let $\sigma^{2^{PS}}(\hat{q})$ be the total variance of estimates based on a sample drawn by proportional allocation and $\sigma^{2^{ERR}}(\hat{q})$ be the total variance of estimates based on a sample drawn by allocation based on different expected response rates. Then, 

\begin{equation}
\sigma^{2^{PS}}(\hat{q})=\frac{1}{Nm}\sum_{h=1}^{H}N_hq_h(1-q_h) \frac{r}{p_h}
\end{equation}

\begin{equation}
\sigma^{2^{ERR}}(\hat{q})=\frac{1}{Nm}\sum_{h=1}^{H}N_hq_h(1-q_h) \frac{r_h}{p_h}
\end{equation}
\end{theorem}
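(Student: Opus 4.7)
The plan is to obtain each variance formula by applying Theorem 4.1 (the multivariate $\delta$-method) to the per-stratum estimator $\hat q_h=Z_{h4}/(Z_{h3}+Z_{h4})$, then combining the strata via the post-stratification weights used in equation~(9), and finally substituting the two allocation formulas~(1) and~(3). Since strata are sampled independently and the post-stratified estimator is the linear combination $\hat q=N^{-1}\sum_h N_h\hat q_h$, the total variance is $\sigma^2(\hat q)=N^{-2}\sum_h N_h^2\,\sigma^2_h(\hat q)$, so the whole problem reduces to finding $\sigma^2_h(\hat q)$.

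First I would fix a stratum $h$ and drop the subscript. Writing $f(z_1,z_2,z_3,z_4)=z_4/(z_3+z_4)$ and noting that $\mathbf Z$ has a multinomial$(n,\mathbf q)$ distribution with $q_3=p(1-q)$ and $q_4=pq$, I would evaluate the gradient $D=\nabla f$ at the mean $(np_1,np_2,np_3,np_4)$ to obtain the vector displayed in~(12), and write down the multinomial covariance $\Sigma$ as in~(13). Theorem~4.1 then gives the asymptotic variance as $D^{\mathsf T}\Sigma D$. Since the first two entries of $D$ vanish, only the $2\times 2$ block of $\Sigma$ indexed by components $3$ and $4$ contributes, which is the step that makes the algebra manageable.

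Next I would carry out the simplification of $D^{\mathsf T}\Sigma D$. Expanding,
\begin{equation*}
D^{\mathsf T}\Sigma D=\frac{q_4^2\,q_3(1-q_3)+q_3^2\,q_4(1-q_4)+2q_3q_4\cdot q_3q_4}{n(q_3+q_4)^4},
\end{equation*}
and substituting $q_3=p(1-q)$, $q_4=pq$ the numerator collapses to $p^3q(1-q)$ and the denominator to $np^4$, giving
\begin{equation*}
\sigma^2_h(\hat q)=\frac{q_h(1-q_h)}{n_h p_h}.
\end{equation*}
This is the calculation already sketched in (14)--(15); the only subtlety worth highlighting is that the post-stratification weights cancel out inside $\hat q_h$ (as shown in (5)), so the allocated sample size $n_h$ is the quantity that legitimately enters the delta-method denominator.

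Finally I would plug the two allocation formulas into this stratum-level variance and sum. Substituting $n_h=n_h^{PS}$ from~(1) yields $\sigma^2_h(\hat q)=r\,q_h(1-q_h)/(m p_h)\cdot\sum_i N_i/N_h$, so
\begin{equation*}
\sigma^{2^{PS}}(\hat q)=\frac{1}{N^2}\sum_{h=1}^{H}N_h^2\,\sigma^2_h(\hat q)=\frac{1}{Nm}\sum_{h=1}^{H}N_h q_h(1-q_h)\frac{r}{p_h},
\end{equation*}
and the analogous substitution with $n_h^{ERR}$ from~(3) replaces $r$ by $r_h$ inside the sum, yielding the ERR formula. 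The main obstacle is therefore less conceptual than notational: keeping the indices of the $4$-vector $\mathbf Z_h$ consistent with the meaning of $p_h$ and $q_h$ throughout the computation of $D^{\mathsf T}\Sigma D$ so that the cancellations leading to $q_h(1-q_h)/(n_h p_h)$ are clean.
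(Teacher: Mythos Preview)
Your proposal is correct and follows essentially the same route as the paper: apply the $\delta$-method to the per-stratum estimator $\hat q_h=Z_{h4}/(Z_{h3}+Z_{h4})$ with the multinomial covariance to obtain $\sigma_h^2(\hat q)=q_h(1-q_h)/(n_h p_h)$, then combine via $\sigma^2(\hat q)=N^{-2}\sum_h N_h^2\sigma_h^2(\hat q)$ and substitute the allocation formulas~(1) and~(3). The only cosmetic difference is that you display $D^{\mathsf T}\Sigma D$ with the cross term still visible, whereas the paper's equation~(14) has already collapsed that term before writing the numerator.
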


\subsection{Comparison Under Correctly Specified Response Rates}

In this section  we prove that in case of correctly specified response rates $r_h$, the variance of the estimate based on the ERR allocation is less than or equal to that of derived from the PS allocation:
\begin{theorem}[The relation between the variances]
Let $\sigma^{2^{PS}}(\hat{q})$ be the total variance of the estimates based on a sample drawn by proportional allocation given in (22), and $\sigma^{2^{ERR}}(\hat{q})$ be the total variance of the estimates based on a sample drawn by allocation based on different expected response rates given in (23). If the observed response rates are equal to the expected response rates, then,

\begin{equation}
\sigma^2_{h_{ERR}}(\hat{q})\leq\sigma^2_{h_{PS}}(\hat{q})
\end{equation}
\end{theorem}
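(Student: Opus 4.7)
My plan is to work entirely at the single-stratum level and reduce (24) to an elementary inequality between the two stratum sample sizes. Starting from the per-stratum asymptotic variances derived in (16) and (19),
\begin{equation*}
\sigma^2_{h_{PS}}(\hat q)=\frac{q_h(1-q_h)}{n_h^{PS}\,p_h},\qquad
\sigma^2_{h_{ERR}}(\hat q)=\frac{q_h(1-q_h)}{n_h^{ERR}\,p_h},
\end{equation*}
the common positive factor $q_h(1-q_h)/p_h$ cancels, so (24) is equivalent, for that fixed $h$, to $n_h^{ERR}\ge n_h^{PS}$. No further probabilistic ingredient beyond what the $\delta$-method calculation of Section 4.2 already produced is needed; the remainder is a scalar comparison of the two allocation rules.

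The next step is to substitute the explicit allocations (1) and (3). Both share the factor $N_h m/\sum_{i=1}^{H} N_i$, and they differ only by a factor $1/r$ versus $1/r_h$. Thus (24) further reduces, stratum by stratum, to the one-variable inequality $1/r_h\ge 1/r$, equivalently $r_h \le r$. The "correctly specified response rates" hypothesis enters at exactly one place: it lets us identify the observed probability $p_h$ appearing in the denominator of (16) and (19) with the expected rate $r_h$ used inside $n_h^{ERR}$, so that no residual misspecification factor $p_h/r_h$ survives. With that identification one even gets the exact ratio $\sigma^2_{h_{PS}}(\hat q)/\sigma^2_{h_{ERR}}(\hat q)=r/r_h$, which is the quantitative form of the theorem.

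The main obstacle, and in effect the entire content of the theorem once the reduction is complete, is the scalar inequality $r_h\le r$. By the defining relation $r=\sum_{h=1}^{H} (N_h/N)r_h$ from Section 2, $r$ is the $N_h$-weighted average of the stratum-level response rates, so $r_h \le r$ is automatic for strata at or below the population mean — precisely the hard-to-reach subgroups for which ERR is designed — and is reversed for strata with $r_h>r$. The per-stratum inequality (24) therefore holds exactly under the supplementary condition $r_h\le r$, which is the operational setting the authors have in mind in Section 4.3 and which I would state explicitly before invoking the reduction.

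I would therefore present the proof as a short chain of equivalences: cancel $q_h(1-q_h)/p_h$ from both sides of (24); cancel $N_h m/\sum_i N_i$ from $n_h^{PS}$ and $n_h^{ERR}$; use $p_h=r_h$ to clean up the denominator; conclude $(24)\iff r_h \le r$. The hardest step, conceptually, is not algebraic but definitional — making clear that the per-stratum statement (24) as worded relies on $r_h$ being no larger than the population-weighted mean $r$, and explicitly identifying this as the scope of the theorem rather than an unconditional consequence of the "correctly specified" hypothesis.
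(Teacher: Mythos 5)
There is a genuine gap: you have proved a different statement from the one the theorem asserts. Despite the stray subscript $h$ in display (24), the theorem explicitly defines $\sigma^{2^{PS}}(\hat q)$ and $\sigma^{2^{ERR}}(\hat q)$ as the \emph{total} variances (22) and (23), i.e.\ the sums over all strata, and that is what the paper compares. Your reduction works stratum by stratum and correctly observes that the per-stratum inequality is equivalent to $r_h\le r$, hence fails for strata with above-average response rates; but you then conclude that the theorem needs the supplementary hypothesis $r_h\le r$, which cannot hold for all $h$ simultaneously (since $r$ is the weighted average of the $r_h$) unless all rates are equal. The point you are missing is that the strata with $r_h>r$ are compensated in the sum by those with $r_h<r$. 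After setting $p_h=r_h$ the aggregate claim becomes
\begin{equation*}
\sum_{h=1}^{H}N_h q_h(1-q_h)\;\le\;\Bigl(\sum_{h=1}^{H}\tfrac{N_h}{N}r_h\Bigr)\sum_{h=1}^{H}\frac{N_h q_h(1-q_h)}{r_h},
\end{equation*}
and the paper's proof handles this by recognizing the right-hand side as (arithmetic mean)$\times$(reciprocal of harmonic mean) of the $r_h$ and invoking the weighted AM--HM inequality (equivalently Cauchy--Schwarz applied to $\sqrt{N_h r_h}$ and $\sqrt{N_h/r_h}$). That global inequality is the entire content of the theorem, and it is absent from your argument.

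A secondary remark in your favour: your instinct that some condition is being swept under the rug is not baseless, but it resurfaces at the aggregate level rather than per stratum. The paper's reduction from (24) to its display (25) silently drops the factors $q_h(1-q_h)$, which is only legitimate if they are constant across strata; if the strata carrying most of the variance (large $q_h(1-q_h)$) happen to be those with $r_h>r$, the weighted version of the inequality can reverse. So the correct diagnosis is not ``the theorem requires $r_h\le r$'' but ``the AM--HM step requires the same weights on both sides,'' which is a different, subtler caveat. As written, though, your proposal does not prove the stated theorem and does not contain its key ingredient.
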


\begin{proof}
\noindent If the response rates are correctly specified, then  $r_h=p_h$, and thus $r$ is the average response rate among strata. Since $N$, $N_h$ and $q_h$ are population parameters, and $m$ is a fixed constant, it is enough to see that
\begin{equation}
\sum_{h=1}^H\displaystyle\frac{1}{\ybar{p_{h}}}
\leq
\sum_{h=1}^H \xbar{\bigg(\displaystyle\frac{1}{p_h}\bigg)}
\end{equation}

\noindent Since the left hand side of (25) is the harmonic mean of the response rates and the right hand side of (25) is the arithmetic mean of the response rates, the classic weighted harmonic-arithmetic means inequality\footnote{Within the theory of the abstraction of Hölder mean, the inequality of arithmetic and harmonic means, or briefly the AM-HM inequality (more precisely the geometric mean is also involved, and called the AM–GM-HM inequality), states that the arithmetic mean of a list of non-negative real numbers is greater than or equal to the harmonic mean of the same list; and further, that the two means are equal if and only if every number in the list is the same. There are various methods to prove, including mathematical induction, the Cauchy–Schwarz inequality, Lagrange multipliers, and Jensen's inequality \autocite{Bullen2003}}. can be used, which states that the harmonic mean is less than or equal to the arithmetic mean and this concludes the proof.
\end{proof}

\subsection{Comparison Under Misspecified Response Rates}
In this section we compare the ERR and PS allocation methods under misspecification that is, when the real response rates differ from the expected ones used in the sample allocation. The comparison is done between the variances of estimates derived from the ERR and PS allocations in several simulated sampling setups with a fixed number of strata, $H=3$.\\
\noindent During the simulations, several possible combination of the following parameters is generated: expected response rates $\big\{ p_1,p_2,p_3\big\}$; observed response rates $\big\{ r_1,r_2,r_3\big\}$; estimated parameter $q$ in every $h$  $(h=1,2,3)$ strata. Size of the population $N$ $(N_1,N_2,N_3)$, size of strata $n_1, n_2, n_3$ and the desired total sample size $m$ are fixed. With all these parameters $15.625.000$ different base sampling positions are defined\footnote{During the simulations the parameters $\big\{p_1,p_2,p_3\big\}$; $\big\{r_1,r_2,r_3\big\}$ generated with values $\big\{0.1, 0.3, 0.5, 0.7, 0.9\big\}$ and the estimated parameter $q$ gets every value from $0$ to $1$ by $0.05$.}.\\
\noindent The variances of the estimates are calculated with the $\delta$-method for each sampling setup and \textit{Figures 1 - 4} show the simulation results.\\

\noindent \textit{Figure 1} shows a comparison of the variances of the estimates obtained by ERR and PS allocations. The comparison is given  as a function of the total absolute misspecification of the response rates, \textit{(x-axis)} and of the total absolute distance of the real response rates $\big\{ r_1,r_2,r_3\big\}$ from their weighted average, \textit{(y-axis)}. It can be seen, that the amount of misspecification between expected and real response rates makes  a great impact on how the ERR allocation performs relative to the PS allocaton, but not independently from the total absolute distance of the real response rates from their weighted average. If in the real response rates are close to their weighted average, the ERR allocation performs better only if the response probabilities are not too poorly estimated. If the total absolute distance of the real response rates from their weighted average is high, the variance of the ERR allocation is smaller than that of the PS allocation, even in some cases with higher misspecification rate.

\begin{figure}[h]
    \centering
    \includegraphics[width=0.9\textwidth]{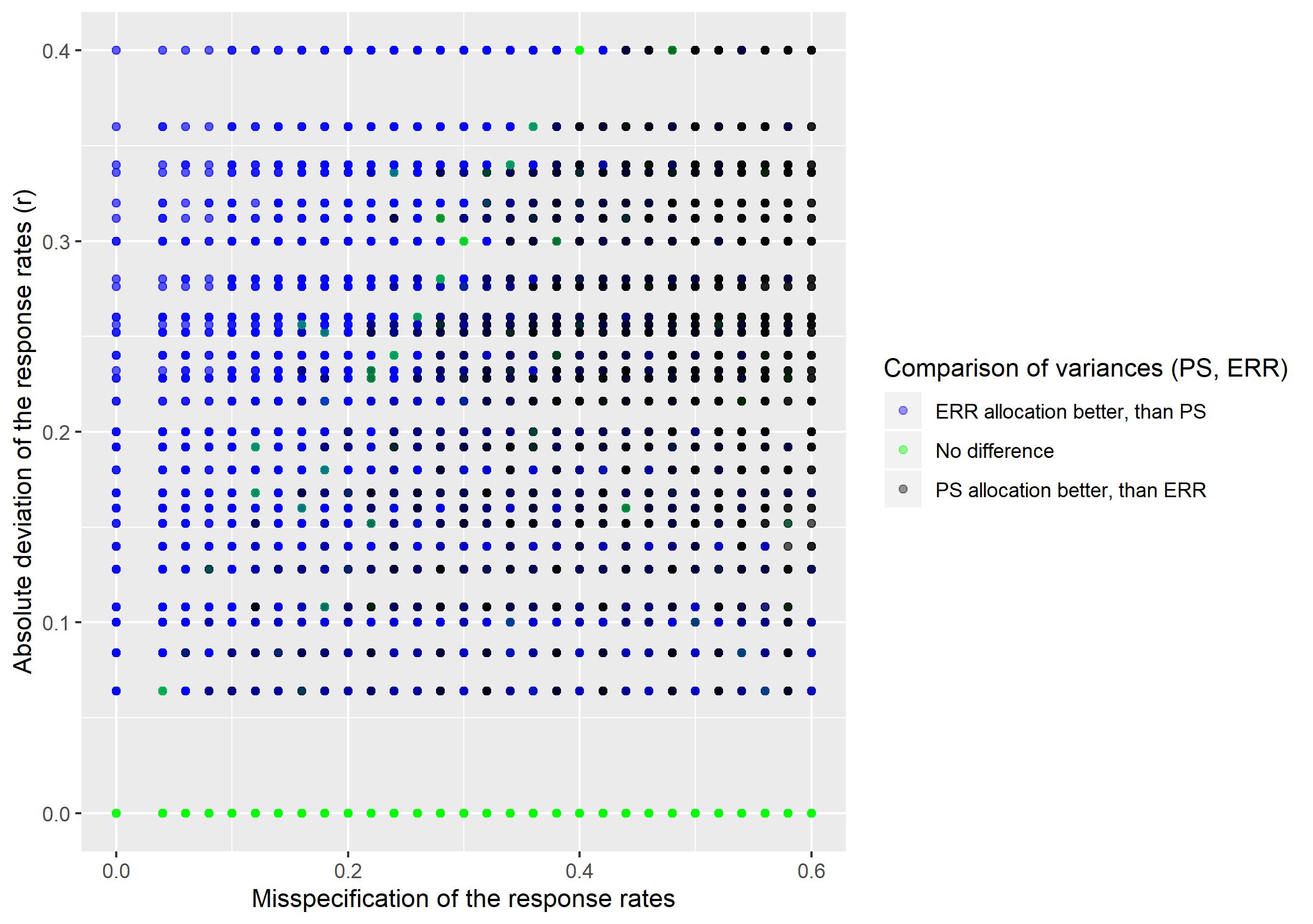}
    \caption{Comparison of the variances of the estimates obtained by ERR and PS allocations, by the total absolute misspecification of the response rates \textit{(x-axis)} and the total absolute distance of the real response rates from their weighted average \textit{(y-axis)}.}
    \label{fig:var_diff1}
\end{figure}

\noindent In \textit{Figure 2} the comparison of the variances of the estimates obtained by ERR and PS allocations is shown in terms of the total absolute misspecification of the response rates, \textit{(x-axis)} and  the total absolute distance of the real response rates from the expected response rates, \textit{y-axis}. When both of these  are relatively low (lower than $0.3-0.4$), the ERR allocation performs better.
In the extreme areas of this plot, the ERR and PS allocations perform equally well.

\begin{figure}[h]
    \centering
    \includegraphics[width=1\textwidth]
    {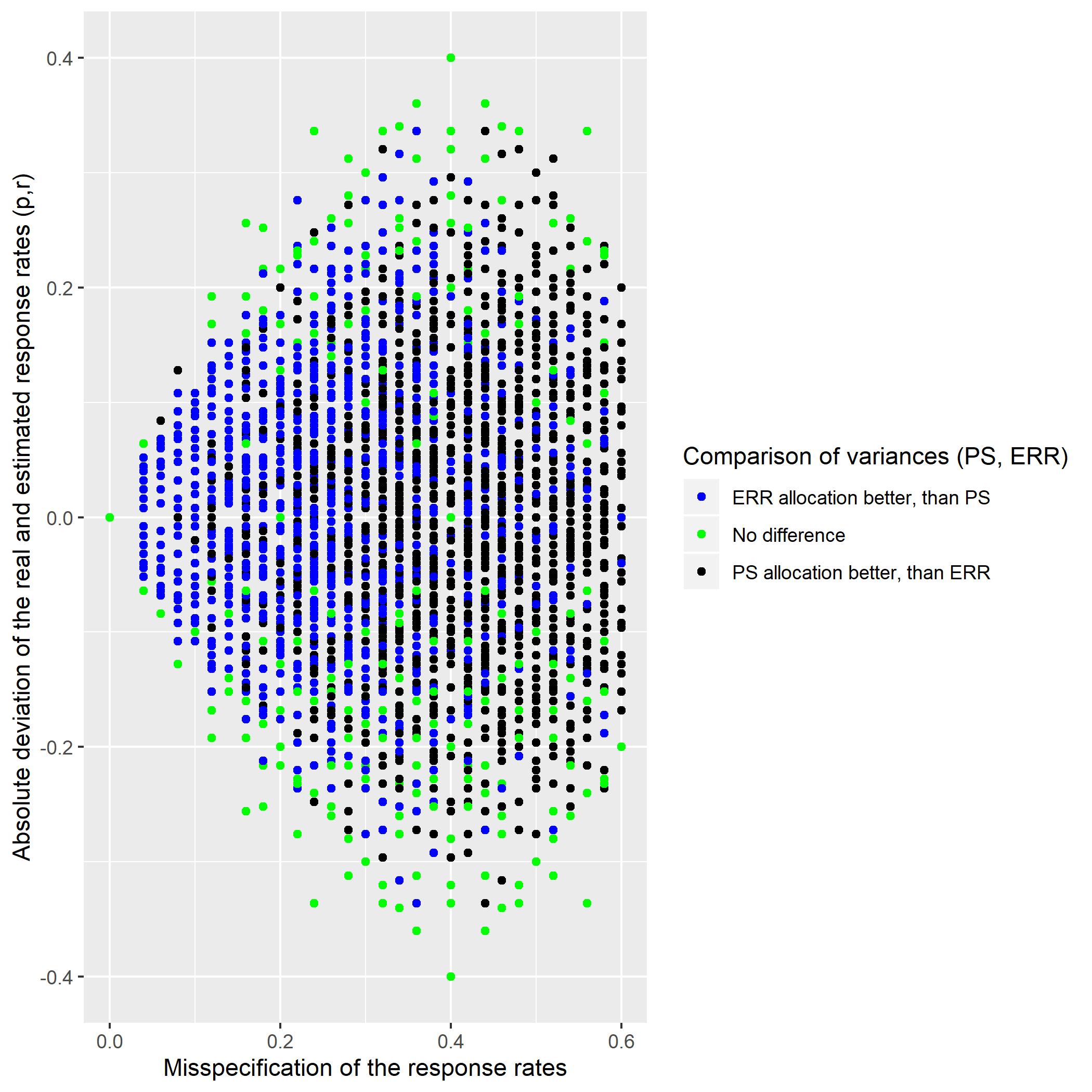}
    \caption{Comparison of the variances of the estimates obtained by ERR and PS allocations, by the total absolute misspecification of the response rates \textit{(x-axis)} and the total absolute distance of the real response rates from the expected response rates \textit{(y-axis)}.}
    \label{fig:var_diff2}
\end{figure}

\newpage
\noindent \textit{Figure 3} is a combination of the two previously presented plots. Here, the comparison of the variances of the estimates obtained by the ERR and PS allocations is presented in terms of the total absolute distance of the response rates from their weighted average \textit{(x-axis)} and the total absolute distance of the real response rates from the expected response rates \textit{(y-axis)}.
It clearly shows, that if the real response rates are closer to the expected ones than to their weighted average, the ERR allocation performs better in every possible sampling setup.

\begin{figure}[h]
    \centering
    \includegraphics[width=1\textwidth]
    {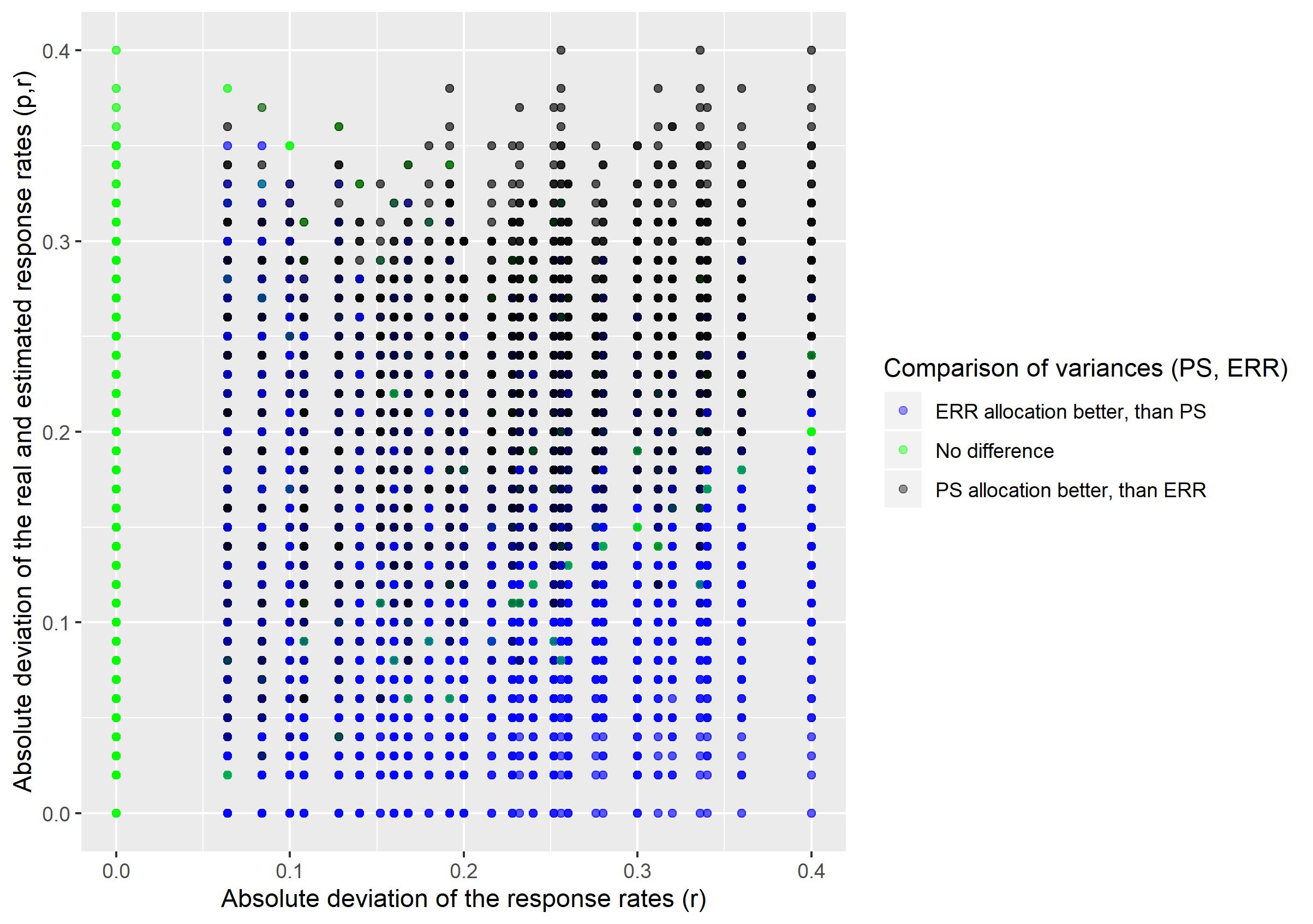}
    \caption{Comparison of the variances of the estimates obtained by the ERR and the PS allocations, by the total absolute distance of the response rates from their weighted average \textit{(x-axis)} and the total absolute distance of the real response rates from the expected response rates \textit{(y-axis)}.}
    \label{fig:var_diff3}
\end{figure}

\noindent On \textit{Figure 4} the comparison of the variances of the estimates obtained by ERR and PS allocations is presented by the total absolute distance of the real response rates from their weighted average \textit{(x-axis)}, the ratio of the variances of the estimates obtained by ERR and PS allocations \textit{(y-axis)} and the total absolute distance of the real response rates from the expected response rates \textit{(colors)}. Simulation data is grouped by the value of $q$ (proportion of answer 'yes') in four different set-ups regarding $q$ in each stratum: (1) $q_{h_1}=q_{h_2}=q_{h_3}=0.1$; (2) $q_{h_1}=0.1$, $q_{h_2}=0.5$, $q_{h_3}=0.9$; (3) $q_{h_1}=q_{h_2}=q_{h_3}=0.5$; (4) $q_{h_1}=q_{h_2}=q_{h_3}=0.9$. It  clearly shows the the diversity of $q$ in each strata (upper right hand side plot) can produce bigger differences between the variances of estimates in ERR and PS allocations, but in general, the closer we get with the real response probabilities to the expected ones and the farther we get from the $\sigma=0$ the better the ERR allocation performs.

\begin{figure}[h]
    \centering
    \includegraphics[width=1\textwidth]{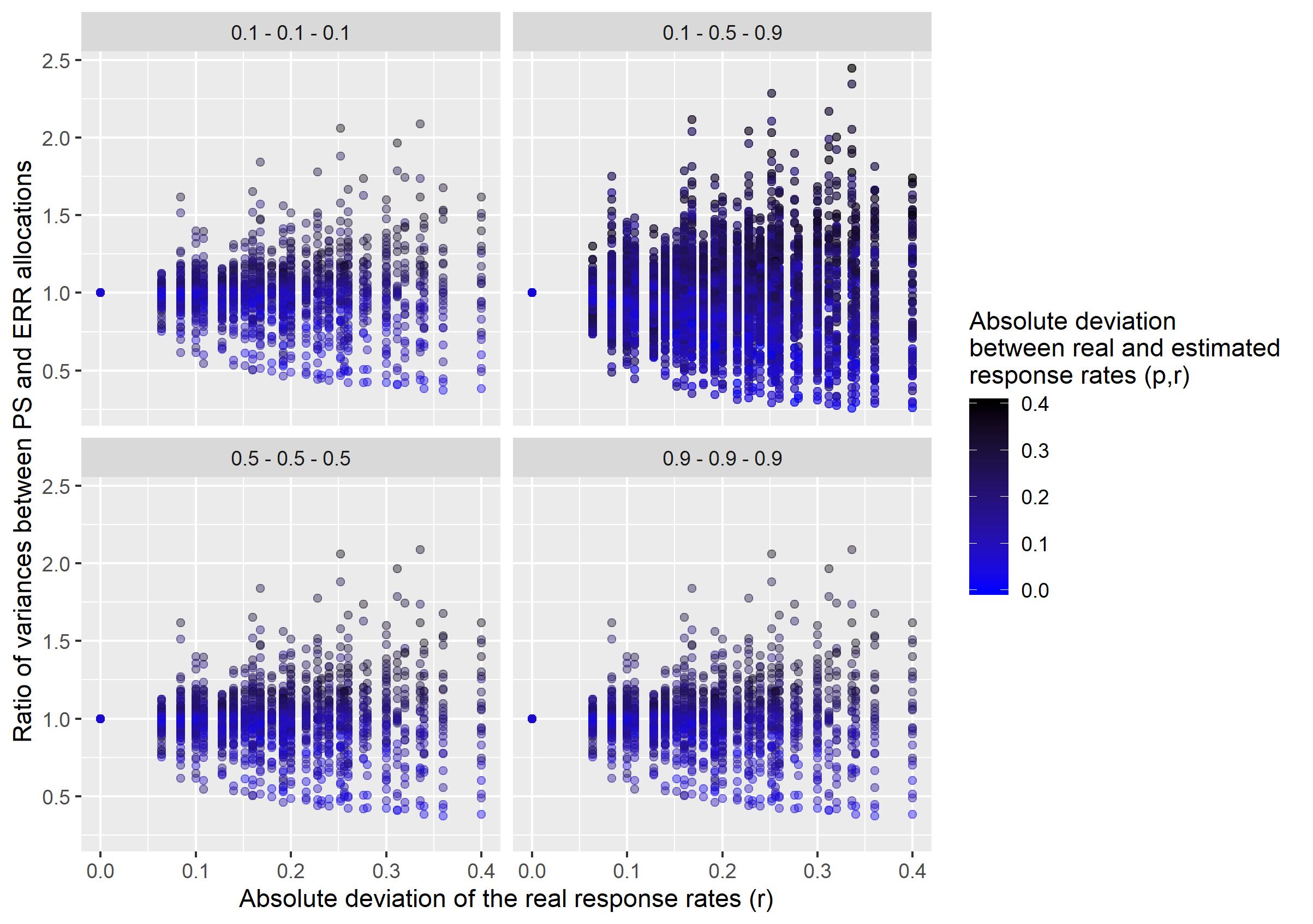}
    \caption{Comparison of the variances of the estimates obtained by ERR and PS allocations, by the total absolute distance of the real response rates from their weighted average \textit{(x-axis)},  ratio of the variances of the estimates obtained by ERR and PS allocations \textit{(y-axis)} and the total absolute distance of the real response rates from the expected response rates \textit{(colors)}. Charts grouped by four possible set-up of value $q$ in every strata}
    \label{fig:var_diff5}
\end{figure}

\newpage
\section{Conclusion}
In this paper we showed how expected nonresponse rates can be involved in the allocation procedure in survey sampling. In the ERR allocation (allocation based on expected response rates), the strata specific expected response rates are used to determine allocated sample sizes within each stratum. We assessed the method by comparing it to a standard proportional allocation method (PS) where strata specific response rates are not used.  The assessment utilized the $\delta$-method. \\
The first finding of the paper is that if the strata specific response rates are  correctly specified, the ERR allocation performs better in terms of the variances of estimates than the PS allocation. In practice, however, it  may be difficult to estimate precisely the strata-specific response rates before sampling. In such cases, approximate response rates based on experience need to be used. Based on the simulation results presented in the paper, the ERR allocation still performs better than the PS allocation provided any of the following conditions hold:
\begin{itemize}
    \item [(a)] the total absolute distance of the real response rates from the expected response rates is small, i.e., misspecification is moderate;
    \item [(b)] the total absolute distance of the real response rates from their weighted average is high, i.e., the real response rates differ from each other highly;
    \item [(c)] the real response rates are closer to the expected ones than to their weighted average.
\end{itemize}

\newpage
\printbibliography

\end{document}